\documentclass[10pt]{article}
\usepackage{graphicx} 
\usepackage{amsmath,amssymb,enumerate,xcolor}
\usepackage{amsthm}
\usepackage[hidelinks]{hyperref}

\title{Existence of Riemannian invariants for integrable systems of hydrodynamic type} 
\author{Alexey V.\ Bolsinov\footnote{School of Mathematics,
 Loughborough University,
 LE11 3TU, UK  and Institute of Mathematics and Mathematical Modeling, Almaty, Kazakhstan\ \ \quad {\tt A.Bolsinov@lboro.ac.uk}},    
Andrey Yu.\  Konyaev\footnote{Faculty of Mechanics and Mathematics and Center for Fundamental and Applied Mathematics, Moscow State University, 119992, Moscow, Russia 
 \ \ \quad{\tt  maodzund@yandex.ru}}  \, and 
    Vladimir S.\ Matveev\footnote{
Institut f\"ur Mathematik, Friedrich Schiller Universit\"at Jena,
07737 Jena,  Germany  \ \ \quad {\tt  vladimir.matveev@uni-jena.de}}}

\date{ }

\newtheorem{theorem}{Theorem}

\newtheorem{lemma}[theorem]{Lemma}
\newtheorem{Conjecture}{Conjecture}

\newcommand{\weg}[1]{}

\newcommand{\Id}{\operatorname{Id}}
\newcommand{\<}{\langle}
\renewcommand{\>}{\rangle}

\usepackage[
  backend=biber,
  style=numeric,
  sorting=nyt,
  isbn=false, doi=true, url=false  
]{biblatex}
\begin{document}

\maketitle
\begin{abstract}
We show that for a hyperbolic system of hydrodynamic type admitting $n$ symmetries, there exists a coordinate system in which the generator of the system and all the symmetries are diagonal.
{\bf MSC: 53A45, 58A30}
\end{abstract}

\section{Introduction}
By an \emph{operator field} we understand a $(1,1)$-tensor field.
For a pair of commuting operator fields $L$ and $M$ (i.e., such that $LM = ML$), one can define a $(1,2)$-tensor field $\langle L, M \rangle$ by
\begin{equation}\label{eq:strongbracket}
\langle L, M\rangle(\xi, \eta) = M[L\xi, \eta] + L[\xi, M\eta] - [L\xi, M\eta] - LM[\xi, \eta].
\end{equation}

This definition is due to A.~Nijenhuis
\cite[formula 3.9]{nijenhuis1951}. Notice that $\langle L,M\rangle + \langle M,L\rangle$ coincides with the Fr\"olicher--Nijenhuis bracket of $L$ and $M$, and its output is a tensor field, also for operator fields that are not necessarily commuting.

We say that $M$ is a {\it symmetry} of an operator field $L$ if $LM = ML$ and the symmetric (in the lower indices) part of $\langle L, M \rangle$ vanishes, that is, $\langle L, M \rangle (\xi, \xi) = 0$ for all vector fields $\xi$. Clearly, if $L$ is a symmetry of $M$, then $M$ is a symmetry of $L$.
Our main result is the following theorem.

\begin{theorem}\label{thm:1}
Let $K_1, \dots, K_n$ be $n$-dimensional mutual symmetries. Assume that at a point $p$, they are linearly independent and there exists a linear combination $\sum c_iK_i$ having $n$ distinct real eigenvalues. Then, in a neighborhood of $p$ there exists a local coordinate system such that all $K_i$ are given by diagonal matrices.
\end{theorem}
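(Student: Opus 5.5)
Let $L=\sum c_iK_i$ be the combination from the statement. Having $n$ distinct real eigenvalues and being linearly independent are both open conditions, so they hold on a neighborhood $U$ of $p$, which I fix from now on. On $U$ I choose smooth vector fields $e_1,\dots,e_n$ spanning the eigenlines of $L$. Mutual symmetries commute pairwise, so each $K_a$ commutes with $L$. Since $L$ has simple spectrum, each $K_a$ preserves every eigenline, i.e. $K_ae_i=\lambda^a_ie_i$ for smooth functions $\lambda^a_i$.

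Linear independence of $K_1,\dots,K_n$ at a point then means exactly that the eigenvalue vectors $\lambda^a=(\lambda^a_1,\dots,\lambda^a_n)\in\mathbb R^n$, $a=1,\dots,n$, are linearly independent, i.e. they span $\mathbb R^n$. The goal reduces to a classical fact: the frame $e_1,\dots,e_n$ can be rescaled into a coordinate frame (coordinate lines equal to the eigenlines) if and only if $[e_i,e_j]\in\operatorname{span}(e_i,e_j)$ for all $i,j$. This is the standard Frobenius/Haantjes-type criterion; for $n\le 2$ there is nothing to prove. In such coordinates every $K_a$ is diagonal.

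The main step is to extract this condition from the symmetry equations. Take two operators $K_a,K_b$ with eigenvalues $\lambda=\lambda^a$, $\mu=\lambda^b$, and write $[e_i,e_j]=\sum_k c^k_{ij}e_k$. Evaluate formula \eqref{eq:strongbracket} on $(e_i,e_j)$ and look only at the $e_k$-component for $k\ne i,j$; derivatives of the eigenvalues only produce $e_i,e_j$-components, so they can be ignored. That component is
\begin{equation*}
c^k_{ij}\,(\lambda_i-\lambda_k)(\mu_k-\mu_j).
\end{equation*}
Vanishing of the symmetric part means $\langle K_a,K_b\rangle(e_i,e_j)+\langle K_a,K_b\rangle(e_j,e_i)=0$. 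Using $c^k_{ji}=-c^k_{ij}$, this gives
\begin{equation*}
c^k_{ij}\,Q_{ijk}(\lambda,\mu)=0,\qquad Q_{ijk}(\lambda,\mu)=(\lambda_i-\lambda_k)(\mu_k-\mu_j)-(\lambda_j-\lambda_k)(\mu_k-\mu_i).
\end{equation*}
I will use this for all pairs $a,b$, including $a=b$.

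To conclude $c^k_{ij}=0$, I need, at each point and for each triple of distinct indices $i,j,k$, a pair $a,b$ with $Q_{ijk}(\lambda^a,\lambda^b)\ne0$. First, $Q_{ijk}$ depends only on the projections of $\lambda,\mu$ to the coordinates $i,j,k$. Second, it is bilinear and skew-symmetric, and it is not identically zero: for instance $\lambda=e_i^*$, $\mu=e_j^*$ give $Q=-1$. Since the $\lambda^a$ span $\mathbb R^n$, their projections span $\mathbb R^3$. A nonzero bilinear form on $\mathbb R^3$ cannot vanish on all pairs from a spanning set, so some pair $(\lambda^a,\lambda^b)$ has $Q_{ijk}\ne0$. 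Hence $c^k_{ij}=0$ everywhere on $U$.

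The point I expect to require care is the coefficient computation itself: checking that the derivative terms and the $e_i,e_j$-components really drop out, and getting the signs of $Q$ right. After that, the independence argument and the final appeal to the Frobenius-type criterion are routine.
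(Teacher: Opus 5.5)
Your proof is correct. It follows the paper's overall strategy (common eigenframe, the criterion $[e_i,e_j]\in\operatorname{span}(e_i,e_j)$, and reading off the transverse components of the symmetrized bracket), but the key computation is organized differently.

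The paper works at a single point with a first-order jet model $K_i=(\Id-A)(E_{ii}+D_i)(\Id+A)$. It first normalizes by constant linear combinations so that $K_i=E_{ii}$ at $x=0$. It then derives the relations $a^{(q)}_{rp}=a^{(p)}_{rq}$ on the Taylor coefficients of $A$; these are exactly the vanishing of the $e_r$-components of $[v_p,v_q]$ at $x=0$.

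You instead compute exactly, on all of $U$, in the anholonomic eigenframe, and obtain $c^k_{ij}\,Q_{ijk}(\lambda^a,\lambda^b)=0$. Your component formula and signs are right, and the derivative terms do drop out. You then replace the normalization by the observation that a nonzero bilinear form cannot vanish on all pairs from a spanning set. The paper's normalization $\lambda^a=e_a^*$ at the base point is just a choice that makes $Q_{ijk}(\lambda^i,\lambda^j)=-1$. So Lemma~\ref{lem:1} amounts to your identity for that one pair.

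What your version buys:
\begin{itemize}
\item It dispenses with the ``easy exercise'' the paper leaves unproved, namely that any commuting family is, to first order, of the form \eqref{eq:b1} with common eigenvectors \eqref{eq:vi}.
\item It makes transparent what is actually used. On the $(i,j,k)$-coordinates, $Q_{ijk}(\lambda,\mu)=-\det(\mathbf{1},\lambda,\mu)$ with $\mathbf{1}=(1,1,1)$. So the eigenvalue vector $\mathbf{1}$ of $\Id$ lies in its kernel, and the pairs $a=b$ contribute nothing (harmless, just unnecessary).
\end{itemize}

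What the paper's version buys: its jet formalism does not need an eigenframe. That is why the same scheme can be turned into the rank computations for Jordan blocks described at the end of the paper.
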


This differential-geometric problem came from mathematical physics and is related to the integrability of PDE systems of hydrodynamic type; let us recall the terminology and the relation.
For an operator field $L$ in dimension $n$, consider the following quasilinear system of $n$ PDEs
for the unknown vector function $u=(u^1(x,t), \dots, u^n(x,t))^\top$:
\begin{equation} \label{eq:1}
u_t= L(u)u_x.
\end{equation}
The subscripts $x$ and $t$ denote partial derivatives, and the operator field $L$ is viewed as its matrix $L(u)$ in local coordinates $u^1,\dots, u^n$. Such systems are called \emph{of hydrodynamic type}; they model many processes in mathematical physics and appear in many pure mathematical problems, see e.g. \cite{Serre3,tsarev1}.
Note that it is natural to think of the matrix $L(u)$ as an operator field, since the transformation rule of the coefficient matrix
$L$ in \eqref{eq:1}, under a diffeomorphic change of the unknown functions, coincides with the transformation rule of $(1,1)$-tensor fields.

Then, the mutual symmetry condition reads as follows: consider the system of $2n$ equations
\begin{equation} \label{eq:2}
u_{t_1}= L(u)u_x\ , \ \ \ u_{t_2}= M(u)u_x,
\end{equation}
for the unknown vector function $u$, which now depends on three variables $(x,t_1,t_2)$:
$u=(u^1(x,t_1,t_2), \dots, u^n(x,t_1,t_2))^\top$. Then, $L$ and $M$ are mutual symmetries if and only if \eqref{eq:2} is formally compatible, i.e., has no differential obstructions. In the analytic category, this is equivalent to the existence and uniqueness of a local solution for any initial data $u(x,0,0)$. Respectively, the operators $K_1,\dots, K_n$ are mutual symmetries if the system
of $n^2$ equations $u_{t_\alpha}= K_\alpha(u)u_x$ on
$u=(u^1(x,t_1,\dots,t_n), \dots, u^n(x,t_1,\dots,t_n))^\top$ is formally compatible.

Local coordinates such that all $K_i$ are diagonal
are called \emph{Riemann invariants} in the literature. Their first appearance is indeed due to B.~Riemann (1860) \cite{Riemann},
who worked with systems of hydrodynamic type in dimension two, where (local) Riemann invariants always exist provided the operator field has two distinct eigenvalues at every point. 
Passing to coordinates in which $K$ is diagonal allowed him to simplify the problem and solve it using the so-called reciprocal transformation.

Systems of the form \eqref{eq:1} are well studied under the assumption of the existence of Riemann invariants. See \cite{Serre4,Serre3} for their study under the assumption of the existence of sufficiently many conservation laws (``rich systems'') and \cite{tsarev} under the assumption of the existence of nontrivial symmetries (``semi-Hamiltonian systems''), and also the survey \cite{tsarev1}. For hyperbolic systems admitting $n$ Riemann invariants, these two sets of assumptions, the existence of nontrivial symmetries and of sufficiently many conservation laws, are equivalent (in the real-analytic category) by \cite{sevennic,Sharipov}. Such systems are sometimes called Tsarev-integrable systems in the literature, as the so-called \emph{generalised hodograph method} \cite{tsarev} reduces solving them to a certain linear system of PDEs. It is known that Tsarev-integrable systems admit (locally) $n$ mutually commuting symmetries which are linearly independent at every point.
Our result shows that for systems whose corresponding operator has $n$ distinct real eigenvalues, the existence of Riemann invariants, which was an additional assumption in the above references, follows from the existence of $n$ mutual symmetries which are linearly independent at every point.

\section{Proof of Theorem \ref{thm:1}}
The main idea of the proof (used in, and possibly due to, \cite{BKM2025}) is as follows: the assumptions imply the existence of vector fields $v_1,\dots, v_n$ such that they are linearly independent at every point and such that they are eigenvector fields of all operators $K_i$. The existence of diagonal coordinates is equivalent to the condition that (for any $i,j$) the three vector fields
\begin{equation}\label{ew:3f}
v_i, v_j, [v_i, v_j]
\end{equation}
are linearly dependent. This condition is algebraic in the components of the vector fields and their first derivatives. Next, observe that the condition \eqref{eq:strongbracket} is also algebraic in the components of the operators and their first derivatives. We will show that the second condition implies the first. To do this at a fixed, arbitrarily chosen point, we may assume that the coordinate system is centered at this point and that the entries of $K_i$ are polynomials of degree one in the coordinates. Since replacing $K_i$ by their constant linear combinations does not affect the assumptions, we may assume that at the point where we work $K_i=E_{ii}$, that is, the only nonzero entry is the one in the $(i,i)$-position, and it equals $1$.

Next, observe that the condition \eqref{eq:strongbracket} is also algebraic in the components of the operators and their first derivatives, so for calculations this condition at $x=0$  we can again ignore higher order terms of entries of  $K_i$. We will show that  the condition \eqref{eq:strongbracket} evaluated at the point $x=0$ implies that \eqref{ew:3f} are linearly dependent at $x=0$. As the point can be chosen arbitrarily, this would prove Theorem \ref{thm:1}. 

We now give details of the calculations.
Let $x=(x_1,\dots,x_n)$ be a local coordinate system centered at zero (in the Introduction, in the part around \eqref{eq:1} and \eqref{eq:2}, the coordinates were denoted by $u^1,\dots, u^n$, since they represent the unknown functions in the system of PDEs). We denote by
\[
e_1=\partial_{x_1},\ \dots,\ e_n=\partial_{x_n}
\]
the corresponding coordinate vector fields. In dimensions $n=1,2$, Theorem \ref{thm:1} is trivial, so in what follows we assume $n\ge 3$.

For each $i\in\{1,\dots,n\}$ consider an $n\times n$ diagonal matrix field
\[
\widetilde K_i(x)= E_{ii} + D_i(x)
\]
where $D_i(x)$ is a diagonal matrix whose diagonal terms are linear in $x$ with no free terms.

Next, consider the matrix $A(x)$, whose entries are linear in $x$ with no free terms:
\[
A(x)=\sum_{m=1}^n x_m A^{(m)},\qquad A^{(m)}=\bigl(a^{(m)}_{pq}\bigr)_{p,q=1}^n\ \text{constant}.
\]
Define
\begin{equation}
\label{eq:b1}
K_i := (\Id-A)\widetilde K_i(\Id+A),\qquad i=1,\dots,n.
\end{equation}
Note that the matrices $K_i$ commute up to quadratic terms in $x$.
It is an easy exercise to show that for any commuting matrices $M_1(x),\dots, M_n(x)$ such that they coincide with $\widetilde K_1,\dots, \widetilde K_n$ at $0$, there exist coefficients
$\bigl(a^{(m)}_{pq}\bigr)_{p,q=1}^n$ such that for any $i$ the matrix $M_i$ coincides with $K_i$ up to terms of second and higher order in $x$. Moreover, there exists a system of common eigenvectors of the $M_i$ such that it coincides, up to second- and higher-order terms, with the system of
vectors $v_1,\dots, v_n$ given by
\begin{equation} \label{eq:vi}
v_i=(\Id - A)e_i.
\end{equation}
Thus, with an appropriate choice of $A$ and $D_1,\dots, D_n$,  formula \eqref{eq:b1} gives a linear approximation of our original operators $K_1,\dots, K_n$.

We need the following technical Lemma.

\begin{lemma}\label{lem:1}
Let
$$
\langle   K_i,  K_j\rangle_{|x=0}(\xi, \eta)+ \langle  K_i,  K_j\rangle_{|x=0}(\eta, \xi) =0, \quad i,j=1,\dots,n
$$
for any vectors $\eta, \xi$. Then, the coefficients of $a^{(m)}_{pq}$ satisfy
\begin{equation}\label{eq:B-system0}
 a^{(q)}_{rp}=a^{(p)}_{rq}\quad \text{for all pairwise distinct }r,p,q.
\end{equation}
\end{lemma}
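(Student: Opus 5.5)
We compute the symmetrized tensor $\langle K_i,K_j\rangle(\xi,\eta)+\langle K_i,K_j\rangle(\eta,\xi)$ at $x=0$ directly in coordinates, evaluating it on pairs of coordinate vectors $\xi=e_p$, $\eta=e_q$. Only the values of $K_i$ at $0$ and their first derivatives enter. Since $K_i(0)=E_{ii}$, every term containing a product of two first derivatives vanishes at the origin. We may also freely drop quadratic terms of $K_i$.

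\textbf{Step 1 (first-order expansion).} To first order,
\[
K_i=\widetilde K_i+[\widetilde K_i,A]+O(x^2)=E_{ii}+D_i+[E_{ii},A]+O(x^2).
\]
Hence
\[
\partial_{x_m}K_i(0)=\partial_{x_m}D_i+[E_{ii},A^{(m)}],
\]
and the $(r,s)$ entry of $[E_{ii},A^{(m)}]$ equals $(\delta_{ri}-\delta_{si})\,a^{(m)}_{rs}$.

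\textbf{Step 2 (coordinate formula).} Using \eqref{eq:strongbracket} with coordinate fields, so that $[e_p,e_q]=0$, we get
\[
\langle L,M\rangle(e_p,e_q)=M[Le_p,e_q]+L[e_p,Me_q]-[Le_p,Me_q].
\]
Each bracket is a derivative term. At $x=0$ the bracket $[Le_p,Me_q]$ equals $L(0)\,\partial_{Le_p}(Me_q)-M(0)\,\partial_{Me_q}(Le_p)$, written out in components.

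Substituting $L=K_i$, $M=K_j$ with $i\neq j$ and $L(0)=E_{ii}$, $M(0)=E_{jj}$ yields an explicit expression. It is a linear combination of entries $(\partial_m K_i)_{rs}(0)$ and $(\partial_m K_j)_{rs}(0)$ with Kronecker-delta coefficients.

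\textbf{Step 3 (choice of indices).} Take $r,p,q$ pairwise distinct and set $i=p$, $j=q$. Evaluate the symmetrized tensor on $(e_p,e_q)$ and read off its $r$-th component.

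Since $r\notin\{p,q\}$, the projection $E_{rr}$ kills any term multiplied on the left by $E_{pp}$ or $E_{qq}$. Moreover $K_p(0)e_p=e_p$ and $K_q(0)e_q=e_q$, while $K_p(0)e_q=0$ and $K_q(0)e_p=0$.

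I expect the surviving terms to be of the form $\partial_q(K_p)_{rp}$ and $\partial_p(K_q)_{rq}$, possibly together with their mirrored counterparts. The $D$'s are diagonal and $r\neq p,q$, so they drop out.

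From Step 1, $(\partial_q K_p)_{rp}=(\delta_{rp}-\delta_{pp})a^{(q)}_{rp}=-a^{(q)}_{rp}$, and similarly $(\partial_p K_q)_{rq}=-a^{(p)}_{rq}$. The vanishing condition should therefore reduce to $a^{(q)}_{rp}-a^{(p)}_{rq}=0$, which is \eqref{eq:B-system0}.

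\textbf{Main obstacle.} The delicate point is the sign bookkeeping in Step 3. The contributions from $[K_ie_p,e_q]$, $[e_p,K_je_q]$ and $[K_ie_p,K_je_q]$ must combine into a difference, not a sum, after symmetrization. We must also check that no other entries survive the $E_{rr}$-projection.

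If the single choice $(\xi,\eta)=(e_p,e_q)$ with $(i,j)=(p,q)$ turns out to be degenerate, I would fall back on other evaluations:
\begin{itemize}
\item $(\xi,\eta)=(e_p,e_p)$ with $(i,j)=(p,q)$;
\item $(\xi,\eta)=(e_q,e_q)$ with $(i,j)=(p,q)$;
\item pairs with $i=r$.
\end{itemize}
I would then combine the resulting linear relations. I would first carry out the full computation in the case $n=3$ with $(r,p,q)=(1,2,3)$ to fix signs, and then generalize by symmetry of indices.
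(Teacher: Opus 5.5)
Your route is the paper's: evaluate the symmetrized bracket on $(e_p,e_q)$ with $(i,j)=(p,q)$, take the $r$-th component, and use $K_i(0)=E_{ii}$ together with the first-order expansion of Step~1. That expansion is correct, as are your values $(\partial_qK_p)_{rp}=-a^{(q)}_{rp}$ and $(\partial_pK_q)_{rq}=-a^{(p)}_{rq}$. As written, however, the decisive step does not go through. Your Step~2 formula $[Le_p,Me_q](0)=L(0)\,\partial_{Le_p}(Me_q)-M(0)\,\partial_{Me_q}(Le_p)$ is wrong. With $L(0)=E_{pp}$ and $M(0)=E_{qq}$ multiplying on the left, your own Step~3 projection argument kills every term. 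The $r$-th component would then be identically zero, and no relation would come out at all --- contradicting the surviving terms you go on to ``expect''. The correct statement is that at $x=0$ the values of $L$ and $M$ enter only through the \emph{directions} of differentiation:
\[
[Le_p,Me_q]^r(0)=L^b_p(0)\,\partial_b M^r_q(0)-M^b_q(0)\,\partial_b L^r_p(0).
\]
For $L=K_p$, $M=K_q$ this equals $(\partial_pK_q)_{rq}(0)-(\partial_qK_p)_{rp}(0)$.

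With this correction, the ``sign bookkeeping'' you flag as the main obstacle is immediate. In $\langle K_p,K_q\rangle(e_p,e_q)$, the terms $K_q[K_pe_p,e_q]$ and $K_p[e_p,K_qe_q]$ carry $E_{qq}$ and $E_{pp}$ on the left at $x=0$, so they drop out of the $r$-th component. Only $-[K_pe_p,K_qe_q]$ survives, giving $-(\partial_pK_q)_{rq}+(\partial_qK_p)_{rp}=a^{(p)}_{rq}-a^{(q)}_{rp}$. The mirrored evaluation on $(e_q,e_p)$ contributes nothing. Its first two terms are killed the same way, and in $[K_pe_q,K_qe_p]$ both differentiation directions vanish, since $K_p(0)e_q=0$ and $K_q(0)e_p=0$; this is exactly the paper's observation. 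So the minus sign comes from the antisymmetry of a single Lie bracket, not from the symmetrization, and none of your fallback evaluations are needed. Once you correct Step~2 and actually carry out this computation, your argument coincides with the paper's.
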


\begin{proof}
By assumptions,   the vector 
\begin{equation} \< K_i, K_j\>_{|x=0}(e_i,e_j)+ \<  K_i, K_j\>_{|x=0}(e_j,e_i) \label{eq:comp}\end{equation} is zero.
Let us take mutually different $r,i,j$ and calculate the $r$th component of \eqref{eq:comp}. We assume $\widetilde K_i=E_{ii}+ D_i$ and $\widetilde K_j=E_{jj}+ D_j$, where $D_i$ and $D_j$ are diagonal matrices whose diagonal terms are linear in $x$.
We have:
$$
\begin{array}{l}\< K_i,  K_j\>_{|x=0}(e_i,e_j)\\ \hspace{1cm}  = \<(\Id- A) (E_{ii}+ D_i) (\Id+ A), (\Id- A) (E_{jj} + D_j) (\Id+ A)\>_{|x=0}(e_i,e_j).\end{array}
$$
Next, since the commutator of vector fields uses only one differentiation, the commutator of vectors whose entries are homogeneous in $x$ and whose degrees sum to $\ge 2$ vanishes after substituting $x=0$. Therefore, we obtain
\begin{equation}\label{eq:vanish}
\begin{array}{rl}\< K_i, K_j\>_{|x=0}(e_i,e_j)=&
\<E_{ii}, E_{jj}A-AE_{jj} + D_j\>(e_i,e_j)\\+&
\<E_{ii}A-AE_{ii} + D_i, E_{jj}\>(e_i,e_j).\end{array}
\end{equation}
Note that the right-hand side of \eqref{eq:vanish} is constant, so we do not need to substitute $x=0$ inside.
Next,
using $[e_i,e_j]=0$, we obtain from \eqref{eq:strongbracket}:
\begin{equation}\label{eq:concomitant-coords} \langle L,M\rangle(e_i,e_j)^r = L^b_i\,\partial_{x_b}M^r_j - M^b_j\,\partial_{x_b}L^r_i - L^r_c\,\partial_{x_i}M^c_j + M^r_c\,\partial_{x_j}L^c_i. \end{equation}
Plugging $L=E_{ii}$ and $M= E_{jj}A-AE_{jj} + D_j$ into this formula, we see that the last three summands on the right-hand side of \eqref{eq:concomitant-coords} vanish, so
the $r$th component of the term $\<E_{ii}, E_{jj}A-AE_{jj} + D_j\>(e_i,e_j)$ in \eqref{eq:vanish} reads
\begin{equation} \label{eq:term1}
\left(\<E_{ii}, E_{jj}A-AE_{jj} + D_j\>(e_i,e_j)\right)^r= -\partial_{x_i} A_{rj}  = - a^{(i)}_{rj}.
\end{equation}
Similarly, the $r$th component of the term $$\<E_{ii}A-AE_{ii} + D_i, E_{jj}\>_{|x=0}(e_i,e_j)$$ in \eqref{eq:vanish} reads $ a^{(j)}_{ri}$.
Next, consider the $r$th component of the second term $$ \<  K_i, K_j\>_{|x=0}(e_j,e_i)$$ 
of \eqref{eq:comp}. For this component, the analog of the formula \eqref{eq:concomitant-coords} reads
\begin{equation}\label{eq:concomitant-coords1} \langle L,M\rangle(e_j,e_i)^r = L^b{}_j\,\partial_{x_b}M^r{}_i - M^b{}_i\,\partial_{x_b}L^r_j - L^r_c\,\partial_{x_j}M^c_i + M^r_c\,\partial_{x_i}L^c_j. \end{equation}
Plugging $L=E_{ii}$ and $M= E_{jj}A-AE_{jj} + D_j$ into this formula, we see that the $r$th component of every summand in \eqref{eq:concomitant-coords1} vanishes.

Combining everything, we see that the $r$th component of the condition
$\langle  K_i, K_j\rangle_{|x=0}(e_i,e_j)+\langle  K_i, K_j\rangle_{|x=0}(e_j,e_i)=0$
implies
\eqref{eq:B-system0}.
\end{proof}

We are now able to prove Theorem \ref{thm:1}. Consider the vectors
$v_i$ given by \eqref{eq:vi}. As explained in the beginning, we need to check that $v_i(0)=e_i$, $v_j(0)=e_j$, and $[v_i,v_j]_{|x=0}$ are linearly dependent.
{In order to do this, let us now calculate $[v_i, v_j]$}.
Set $w_i(x):=A(x)e_i$. Then $v_i=e_i-w_i$ and $w_i$ is linear in $x$.
Hence
\begin{equation} \label{eq:4}
[v_i,v_j]=[e_i-w_i,e_j-w_j]=[e_j,w_i]+[w_j,e_i]+[w_i,w_j].
\end{equation}
Because the components $w_i,w_j$ are linear in $x$, their bracket is also linear and therefore vanishes at $x=0$.
The other two components on the right-hand side can be calculated in view of $(w_j)^p(x)=\sum_{m=1}^n a^{(m)}_{pj}x_m: $
\begin{equation} \label{eq:5} [e_i,w_j]=\partial_{x_i}w_j=\sum_{p=1}^n a^{(i)}_{pj}\,e_p, \ \ [w_i,e_j]=-\partial_{x_j}w_i=-\sum_{p=1}^n a^{(j)}_{pi}\,e_p. \end{equation}
Combining Lemma \ref{lem:1} and \eqref{eq:5}, we see that
all $a^{(i)}_{pj}e_p$ with mutually different $i,j,p$ pairwise 
cancel each other, implying
$
[v_i, v_j]_{|x=0}
\in \mathrm{span}\{e_i,e_j\}=\mathrm{span}\{v_i(0),v_j(0)\} .$
Theorem \ref{thm:1} is proved.

\subsection{ Discussion of the complex eigenvalues and of  Jordan blocks}

A natural analogue of Theorem \ref{thm:1} is also valid if some eigenvalues are complex conjugate (but there still exists a basis, involving complex-conjugate vectors, in which all $K_i$ are diagonal). Of course, in this case the coordinates corresponding to complex eigenvalues are complex-valued. Indeed, the proof of Theorem \ref{thm:1} is essentially algebraic and works over the field of complex numbers with no change. Let us now discuss the more complicated Jordan block case.

By direct computer calculations, using the same circle of ideas as in the proof of Theorem \ref{thm:1}, one obtains the following statement in dimensions $n=3,\dots,10$:

\emph{Let an operator field $K$ be conjugate, at every point, to the nilpotent Jordan $n\times n$ block, and let $K_1,\dots, K_n$ be polynomials in $K$ with coefficients depending on $x$, such that $K_1,\dots, K_n$ are linearly independent at every point. Assume that $\langle K_i, K_j\rangle(\xi,\xi)=0$ for any $\xi$, $i,j=1,\dots,n$. Then,
the Haantjes torsion of $K$, and therefore of all $K_i$, vanishes.}

Moreover, in dimensions $n=3,\dots,7$, direct calculations have shown a similar statement for operator fields $K$ whose Jordan form has precisely two Jordan blocks with different eigenvalues.

The computer algebra algorithm goes as follows. Arguing as in the beginning of the proof of Theorem \ref{thm:1}, it is sufficient to check this statement at $x=0$, assuming that $K=(\Id -A)\, J \,(\Id + A)$, where $J$ is the standard nilpotent Jordan block, the entries of $A$ are linear in $x$, and the coefficients of the polynomials $P_i(t)$ such that $P_i(K)=K_i$ are polynomials of first order in $x$ and satisfy $P_i(t)\big|_{x=0}=t^{i-1}$.

Then, viewing the coefficients of the entries of $A$ and the coefficients of the linear terms of the coefficients of the polynomials $P_i$ as unknowns, the condition that the $K_i$ are mutual symmetries at $x=0$ is a linear system of equations in these unknowns. The condition that the Haantjes torsion of $K$ vanishes at $x=0$ is also a linear system of equations in a subset of these unknowns. Using computer algebra, we checked that adding the second system of equations to the first one does not change the rank of the system. This implies that the second system follows from the first one, which proves the statement.

This suggests the following natural conjecture:
\begin{Conjecture}
Let $K_1, \dots, K_n$ be $n$-dimensional mutual symmetries that are linearly independent at every point.  Assume that there exists a linear combination $\sum c_iK_i$ which is $\mathrm{gl}$-regular. Then the Haantjes torsion of all $K_i$ vanishes.
\end{Conjecture}

\subsection*{Acknowledgments.}  A.\,B. was  supported by the Ministry of Science and Higher Education of the Republic of Kazakhstan (grant No. AP23483476), and   V.\,M. by  the DFG (project  529233771) and the ARC Discovery Programme DP210100951. We thank E. Ferapontov, P. Lorenzoni and  D. Serre for their comments. 

  Data sharing is not applicable to this article. The authors declare no conflicts of interest.

\printbibliography

\end{document}